\newcommand{\prob}{\mathbb{P}}
\newcommand{\real}{\mathbb{R}}
\newcommand{\mean}{\mathbb{E}}
\newcommand{\poisdist}{\textsf{Pois}}
\newcommand{\invgammadist}{\textsf{InvGamma}}
\newcommand{\expdist}{\textsf{Exp}}
\newcommand{\gammadist}{\textsf{Gamma}}
\newcommand{\unifdist}{\textsf{Unif}}
\newcommand{\normaldist}{\textsf{N}}
\newcommand{\betadist}{\textsf{Beta}}
\newcommand{\berndist}{\textsf{Bern}}
\newcommand{\iid}{\mathrel{\mathop{\sim}\limits^{\textrm{iid}}}}
\newcommand{\tpa}{{\sffamily{TPA}}}
\newcommand{\gpas}{{\sffamily{GPAS}}}
\newcommand{\ind}{{\mathds{1}}}
\newcommand{\gk}{{\arabic{line})}\stepcounter{line}}
\newcounter{line}
\newtheorem{lemma}{Lemma}
\newtheorem{fact}{Fact}
\theoremstyle{definition}
\newtheorem{definition}{Definition}
\newtheorem{example}{Example}
\begin{document}

\title{\noindent An estimator for Poisson means whose relative
error distribution is known}

\author{Mark Huber}

\maketitle

\begin{abstract}
Suppose that $X_1,X_2,\ldots$ are a stream of independent, identically
distributed Poisson random variables with mean $\mu$.  This work presents
a new estimate $\mu_k$ for $\mu$ with the property that the distribution
of the relative error in the estimate ($(\hat \mu_k/\mu) - 1$) is known,
and does not depend on $\mu$ in any way.  This enables the construction
of simple exact confidence intervals for the estimate, as well as a means
of obtaining fast approximation algorithms for high dimensional integration
using \tpa{}.  The new estimate requires a random number of Poisson draws,
and so is best suited to Monte Carlo applications.
As an example of such an application, the method is applied to obtain an 
exact confidence interval for the normalizing constant of the Ising model.
\end{abstract}

{\bf Keywords: }  randomized approximation scheme, high-dimensional
                  integration, tpa \\[0pt] 

{\bf MSC 2010: }  68W20, 62L12 

  \section{Introduction}

  A random variable $X$ 
  is Poisson distributed with mean $\mu$ (write $X \sim \poisdist(\mu)$) if 
  $\prob(X = i) = \exp(-\mu)\mu^i/i!$ for $i \in \{0,1,2,\ldots\}$.
  Suppose that $X_1,X_2,\ldots$ are independent identically distributed (iid)
  Poisson random variables 
  with mean $\mu$.  The purpose of this paper is to present a new
  estimator for $\mu$ that uses almost the ideal number of 
  Poisson draws.

  Our estimate will not only use draws from 
  $X_1,X_2,\ldots \iid \poisdist(\mu)$, but make extra random choices as
  well.  This external source of randomness can be represented by a
  random variable $U$ that is uniformly distributed over $[0,1]$ (write
  $U \sim \unifdist([0,1])$.  As is well known, a single draw $U$ is 
  equivalent to an infinite number of draws 
  $U_1,U_2,\ldots \iid \unifdist([0,1])$.

  \begin{definition}  Suppose $\cal A$ is a computable function
  of $X_1,X_2,\ldots \iid \poisdist(\mu)$ and auxiliary randomness (represented
  by $U \sim \unifdist([0,1])$ that outputs $\hat \mu$.  Let $T$ be 
  a stopping time with respect to the natural filtration so that the value of
  $\hat \mu$ only depends on $U$ and $X_1,\ldots,X_T$.  Then call $T$ the 
  {\em running time} of the algorithm.
  \end{definition}

  \begin{definition}
  For an estimate $\hat \mu$ of $\mu$, the {\em relative error} is 
  $\epsilon_\text{relative} = (\hat \mu/\mu) - 1$.
  Call $\hat \mu$ an {\em $(\epsilon,\delta)$-approximation} for 
  $\mu$ if $\prob(|\epsilon_{\text{relative}}| > \epsilon) < \delta$.
  \end{definition}

  The simplest algorithm for estimating $\mu$ just fixes $T = n$,
  and sets 
  \[
  \hat \mu_n = \frac{X_1 + \cdots + X_n}{n}.
  \]
  This basic estimate has several good properties.  First, it is unbiased,
  that is, $\mean[\hat \mu_n] = \mu$.  Second, it is consistent, as 
  $n \rightarrow \infty$, $\hat \mu_n \rightarrow \mu$ with probability 1.
  Third, it is efficient.  Using the Fisher information about $\mu$
  contained in a single $X_i$ with the Cr\'amer-Rao inequality, it is
  possible to show that this estimate has the minimum variance of any
  unbiased estimate that only uses $n$ draws.

  However, this estimate is difficult to use for building
  $(\epsilon,\delta)$-approximation algorithms, as the ratio $\hat \mu_n/\mu$
  depends strongly on $\mu$.  It is well known that 
  $X_1 + \cdots + X_n \sim \poisdist(n\mu)$. 
  Using techniques such as Chernoff bounds 
  to bound the tail of a Poisson distribution, it is possible to bound
  the value of $n$ needed to get an $(\epsilon,\delta)$-approximation.
  
  These bounds however are not tight, and inevitably a slightly larger
  value of $n$ than is necessary will be needed to meet the $(\epsilon,\delta)$
  requirements.

  The goal of this work is to introduce a new estimate for the mean of 
  the Poisson distribution whose relative error is independent of $\mu$,
  the quantity being estimated.

  \subsection{Examples of estimates whose relative error is independent
  of the parameter}

  As an example of a distribution where the basic estimate is scalable, 
  say that $Z$ is normally distributed with mean $\mu$ and variance $\sigma^2$
  (write $Z \sim \normaldist(\mu,\sigma^2)$) if $Z$ has density
  $f_Z(s) = (2\pi \sigma^2)^{-1/2} \exp(-(s - \mu)^2/[2\sigma^2]).$  As is
  well known, normals can be scaled and shifted, and still remain normal.

  \begin{fact}
  For $Z \sim \normaldist(\mu,\sigma^2)$ and constants $a$ and $b$,
  $aZ + b \sim \normaldist(a\mu + b,a^2 \sigma^2)$.
  \end{fact}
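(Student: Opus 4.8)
The plan is to compute the cumulative distribution function of $W = aZ + b$ directly and then differentiate to recover its density, checking that the result matches the $\normaldist(a\mu+b,\,a^2\sigma^2)$ form. First I would dispose of the degenerate case $a = 0$: there $W = b$ is a.s.\ constant, which one regards as the $\sigma^2 \to 0$ limit of the normal family; since all the content is for $a \neq 0$, I restrict to that case henceforth.

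Next, assume $a > 0$. For any $w \in \real$,
\[
\prob(W \le w) = \prob(aZ + b \le w) = \prob\!\left(Z \le \frac{w-b}{a}\right) = \int_{-\infty}^{(w-b)/a} (2\pi\sigma^2)^{-1/2}\exp\!\left(-\frac{(s-\mu)^2}{2\sigma^2}\right)\,ds .
\]
Differentiating in $w$ via the fundamental theorem of calculus and the chain rule gives the density
\[
f_W(w) = \frac{1}{a}\,(2\pi\sigma^2)^{-1/2}\exp\!\left(-\frac{\bigl((w-b)/a - \mu\bigr)^2}{2\sigma^2}\right).
\]
Routine algebra rewrites the exponent's numerator as $\bigl((w-b) - a\mu\bigr)^2/a^2 = \bigl(w - (a\mu+b)\bigr)^2/a^2$, and absorbing the prefactor $1/a$ into $(2\pi\sigma^2)^{-1/2}$ yields $(2\pi a^2\sigma^2)^{-1/2}$; thus $f_W$ is exactly the $\normaldist(a\mu+b,\,a^2\sigma^2)$ density.

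For $a < 0$ the inequality reverses when dividing: $\prob(W \le w) = \prob\!\left(Z \ge (w-b)/a\right) = 1 - \int_{-\infty}^{(w-b)/a}(\cdots)\,ds$, so differentiation produces a factor $-1/a = 1/|a|$ instead of $1/a$. Since $a^2 = |a|^2$, this again combines with $(2\pi\sigma^2)^{-1/2}$ to give $(2\pi a^2\sigma^2)^{-1/2}$, and the exponent is untouched because it involves $a$ only through $a^2$; hence the claim holds for every $a \neq 0$. There is no real obstacle here — one need only be careful about the sign of $a$ when inverting the event and about the bookkeeping that turns the Jacobian $1/|a|$ together with the reshaped exponent into a correctly normalized Gaussian. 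An equally short alternative would be to compute the moment generating function $\mean\bigl[e^{t(aZ+b)}\bigr] = e^{tb}\,\mean\bigl[e^{(ta)Z}\bigr]$, recognize it as the MGF of $\normaldist(a\mu+b,\,a^2\sigma^2)$, and invoke uniqueness of MGFs.
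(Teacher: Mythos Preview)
Your proof is correct and entirely standard: the CDF-plus-chain-rule computation, with the sign of $a$ handled separately, is one of the textbook routes to this result, and your alternative via the MGF is equally valid. The paper, however, does not prove this statement at all --- it is labeled a \emph{Fact} and simply asserted as ``well known,'' so there is no paper proof to compare against. Your write-up would serve perfectly well as a supplied proof.
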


  Now consider $Z_1,Z_2,\ldots \iid \normaldist(\mu,\mu^2)$.  
  In this case, the sample average
  satisfies $\hat \mu_n \sim \iid \normaldist(\mu,\mu^2/n)$, and 
  $(\hat \mu_n/\mu) - 1 \sim \normaldist(0,1/n)$.  
  Note that the distribution of 
  the relative error does not depend in any way on the parameter $\mu$
  being estimated.

  For another example, say that $Y$ is exponentially distributed with
  rate $\mu$ (write $Y \sim \expdist(\mu)$) if 
  $Y$ has density $f_Y(s) = \mu \exp(-\mu s) \ind(s \geq 0)$.  Here
  $\ind(\cdot)$ is the indicator function that is 1 when the argument inside
  is true, and 0 when it is false.
  As with normals, scaled exponentials are still exponential.  Unlike
  normals, the rate parameter is divided by the scale.

  \begin{fact}
  For $Y \sim \expdist(\mu)$ and constant $a$, 
   $a Y \sim \expdist(\mu/a)$.
  \end{fact}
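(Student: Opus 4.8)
The plan is to compute the cumulative distribution function of $W = aY$ directly and recognize it as that of an exponential random variable with the claimed rate. I would first note that we must take $a > 0$, since otherwise $\mu/a$ is not a valid (positive) rate and multiplication by $a$ no longer preserves the event ordering; this positivity is implicit in the statement.

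First I would write, for $w \geq 0$,
\[
\prob(aY \leq w) = \prob(Y \leq w/a),
\]
using monotonicity of multiplication by the positive constant $a$. Then I would evaluate the right-hand side using the known exponential law: integrating the density $f_Y(s) = \mu \exp(-\mu s)$ from $0$ to $w/a$ gives $1 - \exp(-\mu w/a)$. For $w < 0$ the probability is $0$, since $aY \geq 0$ almost surely. The resulting distribution function, $\prob(W \leq w) = (1 - \exp(-(\mu/a)w))\ind(w \geq 0)$, is exactly that of an $\expdist(\mu/a)$ random variable, which proves the claim.

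As an equivalent route, worth noting as a remark, one can apply the change-of-variables formula for densities: the map $w \mapsto w/a$ has derivative $1/a$, so the density of $W$ is $f_Y(w/a)\cdot(1/a) = (\mu/a)\exp(-(\mu/a)w)\ind(w \geq 0)$, again the $\expdist(\mu/a)$ density. There is essentially no obstacle here: the only point requiring any care is the sign of $a$, and either argument reduces to a one-line computation once the exponential distribution function or the Jacobian factor is written down.
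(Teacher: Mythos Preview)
Your argument is correct and is the standard one-line verification via the cumulative distribution function (with the change-of-variables density computation as an equivalent alternative). The paper itself offers no proof: this statement is recorded as a \emph{Fact} and taken as well known, so there is no approach to compare against.
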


  Say that $T$ has a Gamma distribution with shape parameter $k$ and 
  rate parameter $\mu$ (write $T \sim \gammadist(k,\mu)$) if 
  $T$ has density 
  \[
  f_T(s) = \mu^k \Gamma(k)^{-1} s^{k-1}e^{-\mu s} \ind(s \geq 0).
  \]

  Adding iid exponentially distributed random variables 
  together gives a Gamma distributed random variable.

  \begin{fact}
  \label{FCT:sumexponentials}
  If $Y_1,Y_2,\ldots \iid \expdist(\mu)$, then for any $k$,
  $Y_1 + \cdots + Y_k \sim \gammadist(k,\mu)$.
  \end{fact}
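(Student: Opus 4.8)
The plan is to induct on $k$, using the standard fact that the density of a sum of two independent nonnegative random variables is the convolution of their densities. For the base case $k = 1$, one simply observes that $\gammadist(1,\mu)$ has density $\mu^{1}\Gamma(1)^{-1}s^{0}e^{-\mu s}\ind(s \geq 0) = \mu e^{-\mu s}\ind(s \geq 0)$, which is exactly the $\expdist(\mu)$ density, so $Y_1 \sim \gammadist(1,\mu)$ holds trivially.

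For the inductive step, suppose $S_k = Y_1 + \cdots + Y_k \sim \gammadist(k,\mu)$. Since $Y_{k+1}$ is independent of $S_k$ and both random variables are supported on $[0,\infty)$, the density of $S_{k+1} = S_k + Y_{k+1}$ at $s \geq 0$ would be computed as
\[
f_{S_{k+1}}(s) = \int_0^s f_{S_k}(t)\,f_{Y_{k+1}}(s-t)\,dt
  = \mu^{k+1}\Gamma(k)^{-1}e^{-\mu s}\int_0^s t^{k-1}\,dt,
\]
where the two exponential factors combine into $e^{-\mu s}$ and the constants are pulled outside the integral. Evaluating $\int_0^s t^{k-1}\,dt = s^k/k$ and using $k\,\Gamma(k) = \Gamma(k+1)$ yields $f_{S_{k+1}}(s) = \mu^{k+1}\Gamma(k+1)^{-1}s^{k}e^{-\mu s}$ for $s \geq 0$, which is the $\gammadist(k+1,\mu)$ density, closing the induction.

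The only point requiring any care is the bookkeeping with supports: because both $f_{S_k}$ and $f_{Y_{k+1}}$ vanish on the negative half-line, the convolution integral over all of $\real$ collapses to the integral over $[0,s]$, and it is precisely this truncation that produces the polynomial factor $s^{k}$; everything else is a routine manipulation of the Gamma function. As an alternative one could sidestep the induction entirely by observing that the moment generating function of $\expdist(\mu)$ is $t \mapsto \mu/(\mu - t)$ for $t < \mu$, so the product of $k$ independent copies has MGF $(\mu/(\mu-t))^{k}$, which is the MGF of $\gammadist(k,\mu)$; uniqueness of moment generating functions then finishes the proof. I expect the convolution route to be the cleanest to write out in full, since it keeps the argument elementary and self-contained.
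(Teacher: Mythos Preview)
Your proof is correct: the induction via convolution is carried out cleanly, the handling of supports is right, and the alternative MGF argument you sketch is also valid. There is nothing to compare against, however, since the paper does not prove this statement at all; it is recorded as a \emph{Fact}, i.e.\ a standard result quoted without proof. Your write-up simply supplies a routine verification of a textbook identity that the author chose to take for granted.
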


%

  Given $n$ draws $Y_1,\ldots,Y_n$, 
  the maximum likelihood estimator for $\mu$ in this context is the 
  inverse of the sample average (see for instance~\cite{ramachandrant2009}):
  \[
  \hat \mu_{\text{MLE},n} = \frac{n}{Y_1 + \cdots + Y_n}. 
  \] 
  That gives
  \[
  \frac{\hat \mu_{\text{MLE},n}}{\mu} = \frac{n}{\mu Y_1 + \cdots + \mu Y_n}.
  \]

  By scaling $\mu Y_1 \sim \expdist(\mu/\mu) = \expdist(1)$, so 
  $\mu Y_1 + \cdots + \mu Y_n \sim \gammadist(n,1)$.
  Therefore the relative error in $\hat \mu_{\text{MLE},n}$ is independent
  of $\mu$!

  Now, the distribution of $1/T$ where $T \sim \gammadist(k,\mu)$ is 
  called an Inverse Gamma distribution with shape parameter $k$ and 
  scale parameter $\mu$ (write $1/T \sim \invgammadist(k,\mu)$.  Note
  that what was the rate parameter $\mu$ for the Gamma becomes a scale
  parameter for the Inverse Gamma.  The mean of this $\invgammadist(k,\mu)$
  random variable is $\mu/(k-1)$.

  That means a unbiased estimate for $\mu$ is 
  \[
  \hat \mu_{\text{unbiased},n} = \frac{n - 1}{Y_1 + \cdots + Y_n}
  \]
  since the right hand side is $\invgammadist(n,(n-1)\mu)$.

  What about discrete variables that are inherently unscalable?  
  In~\cite{hubertoappearc}, the author presented a method for turning
  a stream of iid Bernoulli random variables (which are 1 with probability $p$,
  and 0 with probability $1 - p$) into a $\gammadist(k,p)$ random variable,
  where $k$ is a parameter chosen by the user.
  This could then be used with the known relative error 
  estimate for exponentials to obtain a known relative error estimate
  for Bernoullis.

  While the Bernoulli application has the widest use, Poissons do appear
  in the output of a Monte Carlo approach to high dimensional integration
  called the Tootsie Pop Algorithm (\tpa{})~\cite{huber2010b,huber2014a}.
  Therefore, to use \tpa{} to build $(\epsilon,\delta)$-approximation
  algorithms, it is 
  useful to have a known relative error distribution for 
  Poisson random variables.

  The remained of this paper is organized as follows.  
  Section~\ref{SEC:method} describes the new estimate and why it works.
  It also bounds the expected running time.  Section~\ref{SEC:applications}
  then shows how this procedure can be used together with \tpa{} to 
  obtain $(\epsilon,\delta)$-approximations for normalizing constants of
  distributions.

  \section{The method}
  \label{SEC:method}

  The new estimate is based upon properties of Poisson point processes.

  \begin{definition}
    A Poisson point process of rate $\mu$ 
    on $\real$ is a random subset $P \subset \real$
    such that the following holds.
    \begin{itemize}
      \item
      For all $a \leq b$, $\mean[\#(P\cap [a,b])] = \mu(b - a)$.
      \item
      For all $a \leq b \leq c \leq d$, $\#(P \cap [a,b])$ and 
    $\#(P \cap [c,d])$ are independent.
    \end{itemize}
  \end{definition}

  It is well known that there
  are (at least) 
  two ways to construct a Poisson point process, which forms the basis
  of the estimate.  

  The first method for simulating a Poisson point process is to
  take advantage of the fact that the number of points within a given
  interval has a Poisson distribution.

  \begin{fact}
  \label{FCT:uniform}
    Let $P$ be a Poisson point process of rate $\mu$.  Then
    for all $a \leq b$, $\#(P \cap [a,b]) \sim \poisdist(\mu(b - a))$.
    Moreover, conditioned on the number of points in the interval,
    the points themselves are uniformly distributed over the interval.  
    That is,
    \[
    [P \cap [a,b]|\#(P \cap [a,b]) = n] \sim \unifdist([a,b]^n).
    \]
  \end{fact}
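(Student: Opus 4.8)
The plan is to treat the two assertions in turn: first that an interval count is Poisson, then that conditioning on the count makes the points uniform. Fix $a \leq b$ and set $\ell = b - a$. Partition $[a,b]$ into $m$ equal subintervals $I_1^{(m)},\ldots,I_m^{(m)}$ of length $\ell/m$ and let $N_j^{(m)} = \#(P \cap I_j^{(m)})$. The second bullet of the definition makes $N_1^{(m)},\ldots,N_m^{(m)}$ independent, the first makes $\sum_{j} \mean[N_j^{(m)}] = \mu \ell$, and $\#(P \cap [a,b]) = \sum_j N_j^{(m)}$. As $m \to \infty$ each summand has mean $\mu\ell/m \to 0$; using the (standard, implicitly assumed) fact that $P$ is a simple locally finite point process, each $N_j^{(m)}$ lies in $\{0,1\}$ outside an event of probability $o(1/m)$, and the triangular-array law of rare events forces $\sum_j N_j^{(m)}$ to converge in distribution to $\poisdist(\mu\ell)$. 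Since this sum is the same random variable for every $m$, its law is exactly $\poisdist(\mu\ell)$. (If one instead adopts the common convention in which interval counts are Poisson by definition, this step is immediate.)

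For the conditional claim I would partition $[a,b]$ into subintervals $I_1,\ldots,I_k$ of \emph{arbitrary} lengths $\ell_1,\ldots,\ell_k$ with $\sum_i \ell_i = \ell$. By the first part and independence, the counts $N_i = \#(P \cap I_i)$ are independent with $N_i \sim \poisdist(\mu \ell_i)$, so for any nonnegative integers with $\sum_i n_i = n$,
\[
\prob\!\left(N_1=n_1,\ldots,N_k=n_k \,\Big|\, \#(P\cap[a,b])=n\right)
= \frac{\prod_{i=1}^{k} e^{-\mu \ell_i}(\mu \ell_i)^{n_i}/n_i!}{e^{-\mu \ell}(\mu \ell)^{n}/n!}
= \binom{n}{n_1,\ldots,n_k}\prod_{i=1}^{k}\left(\frac{\ell_i}{\ell}\right)^{n_i},
\]
the $\mu$'s and exponentials cancelling because $\sum_i \ell_i = \ell$ and $\sum_i n_i = n$. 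This multinomial is precisely the joint distribution of the bin counts obtained by dropping $n$ iid $\unifdist([a,b])$ points into $I_1,\ldots,I_k$. Since the partition was arbitrary, the conditioned configuration and $n$ iid uniform points have the same finite-dimensional distributions; as the intervals generate the Borel sets, a $\pi$-system argument identifies the conditional law with $\unifdist([a,b]^n)$, modulo the irrelevant labeling of the points.

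I expect the only genuinely delicate point to be the orderliness input in the first step. The definition as stated pins down only means and independence of disjoint-interval counts, which does not by itself force the Poisson law; one really needs the standard background assumption that $P$ is a simple, locally finite random set so that multiplicities vanish and the Poisson limit goes through. Once that is granted, the remainder is the routine pmf cancellation displayed above together with the $\pi$-system identification of the conditional law.
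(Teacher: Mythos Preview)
The paper does not prove this statement: it is labeled a \emph{Fact} and is simply quoted as background, like the other Facts in the paper. So there is no paper proof to compare against; your proposal stands on its own.

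Your argument is sound and is the standard one. The multinomial computation in the second part is exactly the textbook route to conditional uniformity, and the $\pi$-system identification is the right way to pass from bin counts to the full law. You are also right to flag the weakness in the first part: the two bullets in the paper's Definition (finite mean on intervals, independence on disjoint intervals) do \emph{not} by themselves force the interval counts to be Poisson; one needs either to build the Poisson law into the definition or to add an orderliness/simplicity hypothesis so that the law-of-rare-events limit goes through. The paper is silent on this, presumably because it is treating the Poisson process informally as a well-known object rather than axiomatizing it. A minor point in the same spirit: the second bullet as written only asserts pairwise independence of counts on two disjoint intervals, whereas your partition argument uses mutual independence of $m$ counts; again this is standard for a Poisson process but not literally contained in the definition as stated.
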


  The second method to building a Poisson point process of rate $\mu$ is 
  to use the fact that the distances between successive points are 
  iid exponentially distributed with rate $\mu$.

  \begin{fact}
  Let $P$ be a Poisson point process of rate $\mu$.  Also, let
  $P \cap [0,\infty) = \{P_1,P_2,\ldots\}$ where $P_i \leq P_{i+1}$ for all
  $i$.  Setting $A_i = P_{i+1} - P_i$ (and $A_1 = P_1$), we have that
  $A_1,A_2,\ldots \iid \expdist(\mu)$.
  \end{fact}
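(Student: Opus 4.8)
The plan is to work directly with the joint tail (survival) function of $(A_1,\dots,A_k)$ and show that
$\prob(A_1 > t_1,\dots,A_k > t_k) = \prod_{i=1}^k e^{-\mu t_i}$ for every $k$ and all $t_1,\dots,t_k \ge 0$; since $\prod_{i=1}^k e^{-\mu t_i}$ is exactly the joint survival function of $k$ independent $\expdist(\mu)$ variables, and the survival function determines the law of a nonnegative random vector, this proves the claim. One should first observe that $P$ almost surely has infinitely many points in $[0,\infty)$, so that every $P_i$, and hence every $A_i$, is well defined: by Fact~\ref{FCT:uniform}, $\#(P\cap[0,n]) \sim \poisdist(\mu n)$, which tends to infinity almost surely.

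The base case is immediate: $\{A_1 > t_1\} = \{P_1 > t_1\}$ is precisely the event that $P$ has no point in $[0,t_1]$, and $\#(P\cap[0,t_1]) \sim \poisdist(\mu t_1)$ by Fact~\ref{FCT:uniform}, so $\prob(A_1 > t_1) = e^{-\mu t_1}$. For the inductive step I would condition on the location of the $j$th point, $P_j = s$. The structural input is that the event $\{P_j = s\}$, together with the values $A_1,\dots,A_j$, is determined by the restriction of $P$ to $[0,s]$, whereas the event $\{A_{j+1} > t_{j+1}\}$, given $P_j = s$, is $\{\#(P\cap(s,s+t_{j+1}]) = 0\}$, which depends only on $P$ inside $(s,s+t_{j+1}]$. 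By the disjoint-interval independence axiom these two are independent, and $\#(P\cap(s,s+t_{j+1}]) \sim \poisdist(\mu t_{j+1})$ by Fact~\ref{FCT:uniform}, so $\prob(A_{j+1} > t_{j+1}\mid A_1,\dots,A_j) = e^{-\mu t_{j+1}}$ irrespective of the earlier gaps. Peeling off the conditional probabilities one at a time via the tower property then yields the claimed factorization.

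The main obstacle is purely technical: $\{P_j = s\}$ is a null event, so the independence step just described needs to be made rigorous. The clean fix is to replace $\{P_j = s\}$ by $\{P_j \in (s,s+h]\}$, which has positive probability and is still measurable with respect to $P\cap[0,s+h]$, run the same argument to get the factorization up to an error that vanishes as $h\downarrow 0$, and integrate against the (exponential) law of $P_j$ supplied by the induction hypothesis. An alternative that sidesteps conditioning entirely uses the second half of Fact~\ref{FCT:uniform}: condition on $\#(P\cap[0,L]) = n$, so that $P_1 < \cdots < P_n$ are the order statistics of $n$ i.i.d.\ $\unifdist([0,L])$ points; the first $k$ spacings then have the explicit joint density $\tfrac{n!}{(n-k)!\,L^n}\bigl(L - t_1 - \cdots - t_k\bigr)^{n-k}$ on the region where this is nonnegative, and letting $L\to\infty$ along $n = \lfloor \mu L\rfloor$ makes this converge to $\prod_{i=1}^k \mu e^{-\mu t_i}$, the density of $k$ independent $\expdist(\mu)$ variables. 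Either route completes the proof; I would use the first, since it keeps the underlying probabilistic content — memorylessness together with independent increments — in plain view.
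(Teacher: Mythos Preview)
The paper states this result as a well-known fact and offers no proof, so there is nothing to compare against on the paper's side. Your proposal is a correct proof: computing the joint survival function of $(A_1,\dots,A_k)$ by peeling off one gap at a time, using that $\{A_{j+1}>t_{j+1}\}$ given $P_j=s$ is the event of no points in $(s,s+t_{j+1}]$, and invoking disjoint-interval independence, is the standard argument, and you correctly identify and address the null-event issue in conditioning on $\{P_j=s\}$. A marginally cleaner packaging is to note that $P_j$ is a stopping time for the natural filtration $\mathcal{F}_s=\sigma(P\cap[0,s])$ and apply the strong Markov property of the Poisson process, which delivers the independence without the $h\downarrow 0$ limit. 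One small caveat on your alternative route: conditioning on $\#(P\cap[0,L])=n$ and then sending $L\to\infty$ with $n=\lfloor \mu L\rfloor$ deterministic does not by itself recover the unconditional law; you would either need to average over $n\sim\poisdist(\mu L)$ before taking the limit, or add a concentration step showing that the Poisson mass is sufficiently peaked near $\mu L$ that the deterministic choice suffices.
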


  Using Fact~\ref{FCT:sumexponentials}, $P_k$ will have a Gamma distribution
  with shape parameter $k$ and rate parameter $\mu$.  So, this is how the 
  estimate works.  First, generate $N_1$, the number of points of the Poisson
  point process in $[0,1]$.  If this is at least $k$, then we know that 
  $P_k \in [0,1]$.  Otherwise, generate $N_2$, the number of points
  in $[1,2]$.  If $N_1 < k$ and $N_1 + N_2 \geq k$, then $P_k \in [0,2]$.
  Otherwise, keep going, generating more Poisson random variates until
  we know that $P_k \in [i,i+1]$ for some integer $i$.

  Let $A = N_1 + \cdots + N_{i-1}$.  Then we 
  know that $A < k$ points are in $[0,i]$, and $A + N_i \geq k$.  From 
  Fact~\ref{FCT:uniform}, the $N_i$ points are uniformly distributed over
  $[i,i+1]$.  The $k - A$ smallest of these points will be $P_k$.  One more
  well known fact about the order statistics of uniform random variables
  will be helpful.

  \begin{fact}
  If $U_1,\ldots,U_n \iid \unifdist([0,1])$, then 
  $U_{(i)} \sim \betadist(i,n-i+1).$
  \end{fact}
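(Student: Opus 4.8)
The plan is to obtain the density of $U_{(i)}$, the $i$-th smallest among $U_1,\ldots,U_n$, directly and then recognize it as the $\betadist(i,n-i+1)$ density. Recall from the definition given earlier that $T \sim \betadist(a,b)$ means $T$ has density $f_T(s) = \Gamma(a+b)\Gamma(a)^{-1}\Gamma(b)^{-1}s^{a-1}(1-s)^{b-1}\ind(s \in [0,1])$; so with $a = i$ and $b = n-i+1$ the goal is to show that $U_{(i)}$ has a density proportional to $s^{i-1}(1-s)^{n-i}$ on $[0,1]$, with normalizing constant $n!/[(i-1)!(n-i)!] = \Gamma(n+1)/[\Gamma(i)\Gamma(n-i+1)]$.

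The first route is through the cumulative distribution function. For $t \in [0,1]$, the event $\{U_{(i)} \le t\}$ is precisely the event that at least $i$ of the $n$ independent draws fall in $[0,t]$, and each does so independently with probability $t$, so
\[
\prob(U_{(i)} \le t) = \sum_{j=i}^{n} \binom{n}{j} t^j (1-t)^{n-j}.
\]
Differentiating in $t$ and observing that consecutive terms of the resulting sum telescope, all but the $j = i$ contribution cancels, leaving the density $\frac{n!}{(i-1)!(n-i)!}\,t^{i-1}(1-t)^{n-i}$, which is exactly the $\betadist(i,n-i+1)$ density.

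Alternatively, and perhaps more transparently, one can argue at the level of the density itself: for $s \in (0,1)$ and small $h > 0$, the probability that $U_{(i)} \in [s,s+h]$ equals, up to an $o(h)$ error, the probability that exactly one draw lies in $[s,s+h]$, exactly $i-1$ lie in $[0,s)$, and exactly $n-i$ lie in $(s+h,1]$; there are $n\binom{n-1}{i-1}$ ways to assign the draws to these groups, which gives $n\binom{n-1}{i-1}s^{i-1}(1-s)^{n-i}h + o(h)$, and dividing by $h$ and sending $h \to 0$ yields the same answer. The computations are routine in either approach; the only place requiring a small amount of care is justifying the telescoping cancellation (or, in the second approach, controlling the $o(h)$ terms and the boundary cases $s \in \{0,1\}$) and checking that the combinatorial prefactor matches the Beta normalizing constant. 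Neither step is a genuine obstacle.
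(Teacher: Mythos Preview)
Your argument is correct: both the CDF-then-differentiate route (with the telescoping cancellation) and the direct infinitesimal ``one point in $[s,s+h]$, $i-1$ below, $n-i$ above'' computation are standard and valid derivations of the $\betadist(i,n-i+1)$ density for $U_{(i)}$. The paper itself offers no proof of this statement; it is recorded as a well-known fact and simply cited, so there is nothing to compare your approach against.

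One small framing issue: you write ``Recall from the definition given earlier that $T \sim \betadist(a,b)$ means \ldots'', but the paper never actually defines the Beta distribution (it defines the Poisson, normal, exponential, Gamma, and inverse Gamma, but not Beta). If you keep that sentence, rephrase it as introducing the Beta density rather than recalling it.
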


  Putting this all together gives the following estimate, called
  the Gamma Poisson Approximation Scheme, or \gpas{} for short.

  \begin{center}
\setcounter{line}{1}
\begin{tabular}{rll}
\toprule
\multicolumn{3}{l}{{\sc Gamma\_Poisson\_Approximation\_Scheme}} \\
\multicolumn{3}{l}{
 {\em Input:} 
 $k$ \quad
  {\em Output:} 
 $\hat \mu_k$
 } \\
\midrule
\gk & \hspace*{0em} $A \leftarrow 0, i \leftarrow 0$ \\
\gk & \hspace*{0em} While $A < k$ & [Draw $k$ points.]\\
\gk & \hspace*{1em}   $T \leftarrow \poisdist(\mu)$ \\
\gk & \hspace*{1em}   If $A + T \geq k$ & [Then have $k$ points.] \\
\gk & \multicolumn{2}{l}
  {\hspace*{2em}   $T' \leftarrow i + \betadist(k - A,T - (k - A) + 1)$} \\
\gk & \multicolumn{2}{l} 
  {\hspace*{1em}  $A \leftarrow A + T$, $i \leftarrow i + 1$} \\
\gk & \hspace*{0em} $\hat \mu_k \leftarrow (k - 1)/T'$ \\
\bottomrule
\end{tabular}
\end{center}

\begin{lemma}
  The expected number of Poisson random variables drawn by GPAS is 
  bounded above by $1 + k/\mu$.  
\end{lemma}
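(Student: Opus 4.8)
The key observation is that the number of Poisson draws made by \gpas{} is exactly the index $i$ at the moment the loop terminates, i.e. the first integer $m$ such that $N_1 + \cdots + N_m \geq k$, where $N_j \sim \poisdist(\mu)$ are the iid interval counts.  Equivalently, if $P$ is the Poisson point process of rate $\mu$ built from these counts, then the number of draws is $\lceil P_k \rceil$, the ceiling of the location of the $k$-th point.  So if $M$ denotes the number of Poisson random variables drawn, we want to show $\mean[M] = \mean[\lceil P_k \rceil] \leq 1 + k/\mu$.

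First I would record that $M = \lceil P_k \rceil \leq P_k + 1$ deterministically (for $P_k > 0$, which holds with probability $1$), so $\mean[M] \leq \mean[P_k] + 1$.  Then by Fact~\ref{FCT:sumexponentials}, $P_k \sim \gammadist(k,\mu)$, whose mean is $k/\mu$; alternatively one can write $P_k = A_1 + \cdots + A_k$ with $A_j \iid \expdist(\mu)$ and use $\mean[A_j] = 1/\mu$ together with linearity.  Combining the two bounds gives $\mean[M] \leq 1 + k/\mu$, as claimed.

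**The main obstacle.**  The only real subtlety is justifying the identity $M = \lceil P_k \rceil$, i.e. that the loop counter $i$ on termination equals the ceiling of the $k$-th point's location.  This requires noting that drawing $N_j \leftarrow \poisdist(\mu)$ on the $j$-th pass is, jointly over all passes, an exact simulation of the interval counts $\#(P \cap [j-1,j])$ of a single rate-$\mu$ process $P$ (by Fact~\ref{FCT:uniform} and the independence of counts on disjoint intervals), and that the loop stops on the first $m$ with $N_1 + \cdots + N_m \geq k$, which is precisely the first integer interval containing at least $k$ points, i.e. $m = \lceil P_k \rceil$.  A minor point to mention is that the expectation bound should be stated with the convention that $P_k > 0$ almost surely, so the ceiling is at least $1$ and the bound $M \leq P_k + 1$ is valid; this also matches the trivial fact that \gpas{} always makes at least one draw.
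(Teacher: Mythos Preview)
Your proposal is correct and follows exactly the same approach as the paper: identify the number of Poisson draws as $\lceil P_k \rceil$, bound this by $P_k + 1$, and use $\mean[P_k] = k/\mu$ from $P_k \sim \gammadist(k,\mu)$. Your additional justification of the identity $M = \lceil P_k \rceil$ is more careful than the paper's one-line assertion, but the argument is otherwise identical.
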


  \begin{proof}
    The number of Poisson random variables drawn is 
    $\lceil P_k \rceil \leq P_k + 1$.  
    Since $P_k \sim \gammadist(k,\mu)$, $\mean[P_k] = k/\mu$, which shows
    the result.
  \end{proof}

Note that any fixed time algorithm would need a similar number of samples
to obtain such a result. 

\begin{fact}
  The Fisher information of $\mu$ for $X \sim \poisdist(\mu)$ is $1/\mu$.
\end{fact}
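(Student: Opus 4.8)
The plan is to use the standard definition of Fisher information for a single observation, namely $I(\mu) = \mean\bigl[(\partial_\mu \log f(X;\mu))^2\bigr]$, and to exploit the fact that for the Poisson pmf the log-likelihood is especially simple. First I would write $f(x;\mu) = e^{-\mu}\mu^x/x!$, so that $\log f(x;\mu) = -\mu + x\log\mu - \log(x!)$, and differentiate with respect to $\mu$ to get the score function $\partial_\mu \log f(x;\mu) = -1 + x/\mu = (x-\mu)/\mu$. This step is purely mechanical and carries no real difficulty.

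Next I would square the score and take expectations: $I(\mu) = \mean\bigl[(X-\mu)^2/\mu^2\bigr] = \mean[(X-\mu)^2]/\mu^2$. The numerator is just the variance of $X$, which for $X \sim \poisdist(\mu)$ equals $\mu$ (a standard fact about the Poisson distribution that can be recalled or derived from the moment generating function if desired). Dividing gives $I(\mu) = \mu/\mu^2 = 1/\mu$, as claimed.

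As a consistency check — and an alternative route that avoids invoking $\mathrm{Var}(X)$ — I would note the second-derivative form $I(\mu) = -\mean\bigl[\partial_\mu^2 \log f(X;\mu)\bigr]$. Here $\partial_\mu^2 \log f(x;\mu) = -x/\mu^2$, so $I(\mu) = \mean[X]/\mu^2 = \mu/\mu^2 = 1/\mu$, using only $\mean[X] = \mu$. Either computation suffices; I would present the shorter one and remark that the regularity conditions permitting interchange of differentiation and expectation hold trivially here since the Poisson support does not depend on $\mu$.

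There is essentially no main obstacle: the entire statement is a one-line calculation once the log-likelihood is written down. If anything, the only thing worth being careful about is stating clearly which standard fact about the Poisson distribution ($\mathrm{Var}(X) = \mu$ or $\mean[X] = \mu$) is being used, so that the derivation is self-contained.
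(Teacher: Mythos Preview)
Your proposal is correct: both the score-squared computation and the second-derivative check yield $I(\mu)=1/\mu$, and the regularity remark is appropriate. Note, however, that the paper offers no proof of this statement at all---it is recorded as a \emph{Fact} and simply asserted, in keeping with the paper's convention of citing standard distributional properties without derivation---so your argument goes beyond what the paper itself supplies.
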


Therefore, by the Cr\'amer-Rao inequality, the variance of any unbiased
estimate $\hat \mu$
that uses $n$ draws is at least $\mu/n$, so for 
$k/\mu$ draws, the standard deviation will be at least $\mu/\sqrt{k}$.

\begin{lemma}
  The output $\hat \mu_k$ of \gpas{} has distribution 
  $\invgammadist(k,(k-1)\mu)$,
  and has standard deviation $\mu/\sqrt{k-2}$.
\end{lemma}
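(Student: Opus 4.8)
The plan is to show that the quantity $T'$ produced by \gpas{} is exactly a draw of $P_k$, the $k$-th point on $[0,\infty)$ of a Poisson point process of rate $\mu$, and then to read off the distribution of $\hat\mu_k = (k-1)/T'$ from the distributional facts already established. Granting that first claim, Fact~\ref{FCT:sumexponentials} gives $P_k = A_1 + \cdots + A_k \sim \gammadist(k,\mu)$; hence $1/P_k \sim \invgammadist(k,\mu)$, and since multiplying a $\gammadist(k,\mu)$ variable by $1/c$ yields (summand by summand, using the scaling of exponentials) a $\gammadist(k,c\mu)$ variable, we get $\hat\mu_k = (k-1)(1/P_k) = 1/(P_k/(k-1)) \sim \invgammadist(k,(k-1)\mu)$, which is the first assertion.

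So the substantive step is the coupling: run \gpas{} against a single realization of a Poisson point process $P$ of rate $\mu$ on $\real$. Identify the variable $T$ drawn in the iteration with loop index $i$ (for $i = 0,1,2,\ldots$) with $\#(P\cap[i,i+1])$; by Fact~\ref{FCT:uniform} these counts are each $\poisdist(\mu)$, and by independence over disjoint intervals they are mutually independent, matching line~3 of the algorithm. The \textbf{While} loop exits at the first $i$ whose cumulative count $A+T$ reaches $k$, i.e.\ at the unit interval $[i,i+1]$ containing $P_k$; at that point $A = \#(P\cap[0,i])$, so $k-A$ is the rank of $P_k$ among the $T$ points of $P$ lying in $[i,i+1]$. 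Conditioned on $T$, Fact~\ref{FCT:uniform} says those $T$ points are i.i.d.\ $\unifdist([i,i+1])$ and independent of everything drawn on earlier intervals; hence the $(k-A)$-th smallest of them is distributed as $i + U_{(k-A)}$ with $U_1,\ldots,U_T \iid \unifdist([0,1])$, and the order-statistic fact identifies this with $i + \betadist(k-A,\,T-(k-A)+1)$, which is precisely line~5. Termination with probability one, so that the coupling is well defined, holds because $P_k < \infty$ almost surely (it is already bounded in expectation by the preceding lemma). Keeping straight that the "new" points of the terminal interval are conditionally uniform and independent of the history is the only delicate point; the rest is bookkeeping, and it shows $T' \overset{d}{=} P_k$.

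It remains to compute the moments of $\hat\mu_k$. For $T \sim \gammadist(k,\mu)$ and any $j < k$, recognizing $s^{k-j-1}e^{-\mu s}$ as proportional to a $\gammadist(k-j,\mu)$ density gives $\mean[T^{-j}] = \mu^j\,\Gamma(k-j)/\Gamma(k)$. Thus $\mean[\hat\mu_k] = (k-1)\mean[P_k^{-1}] = (k-1)\cdot\mu/(k-1) = \mu$ (so the estimate is unbiased), and for $k > 2$,
\[
\mean[\hat\mu_k^2] = (k-1)^2\,\mean[P_k^{-2}] = (k-1)^2\cdot\frac{\mu^2}{(k-1)(k-2)} = \frac{(k-1)\mu^2}{k-2},
\]
so $\operatorname{Var}(\hat\mu_k) = (k-1)\mu^2/(k-2) - \mu^2 = \mu^2/(k-2)$, and the standard deviation is $\mu/\sqrt{k-2}$, as claimed.
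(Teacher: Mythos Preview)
Your argument is correct and follows exactly the line the paper sets up: the discussion preceding the \gpas{} pseudocode already explains that $T'$ is, by construction, the $k$th point $P_k$ of a rate-$\mu$ Poisson point process, hence $\gammadist(k,\mu)$, and the inverse-gamma identification of $(k-1)/T'$ is the same computation done earlier for the exponential estimate. The paper states the lemma without a formal proof, so your coupling paragraph simply makes explicit what the paper leaves to the reader; your moment computation for the standard deviation is likewise a detail the paper omits entirely but which is routine and correct.
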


Therefore to first order for the same number of samples, the resulting
unbiased estimate achieves the minimum variance.  Of course, the real
benefit of using \gpas{} is that is provides an exact relative error
distribution, thus allowing for precise calculations of the chance of 
error.

\begin{example}
  For $k = 1000$, \gpas{} is a $(0.1,0.0018)$-approximation algorithm
  for $\mu$.
  \begin{align*}
  \prob((1-0.1)\mu \leq 999/T' \leq (1+0.1)\mu) &= 
    \prob(999/0.9 \geq T'/\mu \geq 999/1.1) \\
    &= 0.001786\ldots.
  \end{align*}
  since $T' \mu \sim \gammadist(1000,999).$
\end{example}

\begin{example}
What should $k$ be in order to make \gpas{} an $(0.1,10^{-6})$-approximation
algorithm?

Increasing the value of $k$ in the previous example until we reach the 
first place where $\prob((k-1)/0.9 \geq T'/\mu \geq (k-1)/1.1)$ gives
$k = 2561$ as the first place where this occurs.
\end{example}

In fact, in the previous example 
\[
\prob(2560/0.9 \geq T'/\mu \geq 2560/1.1) = 0.0000009970\ldots,
\]
and so is slightly smaller than the error bound requested. It is possible
to create an algorithm with exactly $10^{-6}$ chance of failure by running
\gpas{} either with $k = 2561$ or $k = 2560$ with the appropriate 
probabilities.  This gives the following algorithm, where 
$p_k$ is the 
cumulative
distribution function of a Gamma distribution with shape $k$ and 
rate $k - 1$.

 \begin{center}
\setcounter{line}{1}
\begin{tabular}{rl}
\toprule
\multicolumn{2}{l}{{\sc Exact\_GPAS}} \\
\multicolumn{2}{l}{
 {\em Input:} 
 $\epsilon,\delta$ \quad
  {\em Output:} 
 $\hat \mu$
 } \\
\midrule
\gk & \hspace*{0em} Let $f_i(s) = q_i(1/(1+\epsilon))+(1 - q_i(1/(1-\epsilon)))$
  \\
\gk & \hspace*{0em} Let $k \leftarrow 
  \min\{i:f_i(s) \leq \delta \}$ \\
\gk & \hspace*{0em} $p \leftarrow (\delta - f_{k}(s))/(f_{k-1}(s)-f_{k}(s))$ \\
\gk & Draw $C \leftarrow \berndist(p)$ \\
\gk & If $C = 1$ then $k \leftarrow k - 1$ \\
\gk & $\hat \mu \leftarrow $ \textsc{Gamma\_Poisson\_Approximation\_Scheme}$(k)$
  \\
\bottomrule
\end{tabular}
\end{center}

  \section{Applications}
  \label{SEC:applications}

So why approximate the mean of a Poisson in the first place?  One of the
applications is to the Tootsie Pop Algorithm 
(\tpa{})~\cite{huber2010b,huber2014a}.  Given a set
$A \subset B \in \real^n$, 
the purpose of \tpa{} is to estimate $\nu(B)/\nu(A)$ for
some measure $\nu$.

This is exactly the problem of approximating a high dimensional integral
that arises in such problems as finding the normalizing constant of 
a posterior distribution in Bayesian applications.
The output of \tpa{} (see~\cite{huber2010b,huber2014a}) is exactly a 
Poisson random variable with mean $\ln(\nu(B)/\nu(A))$. 

Typically the situation is that $\nu(A)$ is known, and the 
goal is to approximate the other.  Let $r = \ln(\nu(B)/\nu(A))$.
Then if $\hat r$ is an approximation for $r$, then
$\exp(\hat r)$ is an approximation for $\nu(B)/\nu(A)$, and 
$\nu(A) \exp(\hat r)$ is an approximation for $\nu(B)$.  

An $(\epsilon,\delta)$-approximation for $\nu(B)$ can therefore be obtained
by finding 
an $(\epsilon,\delta)$-approximation for $\exp(r)$.  Note
\begin{align*}
\prob((1-\epsilon)e^r \leq \exp(\hat r) \leq (1 + \epsilon)e^r)
 &= \prob(r + \ln(1 - \epsilon) \leq \hat r \leq r + \ln(1 + \epsilon)) \\
 &= \prob\left(1 + \frac{\ln(1 - \epsilon)}{r} \leq \frac{\hat r}{r}
   \leq 1 + \frac{\ln(1+\epsilon)}{r}\right).
\end{align*}

Since $|\ln(1 + \epsilon)| < |\ln(1 - \epsilon)|$, the
needed bound on 
the relative error is $\ln(1 + \epsilon)/r$.

A two-phase procedure is used to obtain the estimate.  In the first
phase, $r$ is estimated with a $(\epsilon,\delta/2)$-approximation called
$\hat r_1$.  So with probability at least $1 - \delta/2$, it holds that 
$r \geq \hat r_1/(1-\epsilon)$.  In
the second phase, $r$ is estimated with a 
$(\ln(1+\epsilon)\hat r_1^{-1}(1-\epsilon),\delta/2)$-approximation
called $\hat r_2$.

Using the union bound, the chance that both phases are successful is at
least $1 - \delta/2 - \delta/2 = 1 - \delta$, and the above calculation
shows that $\exp(\hat r_2)$ is an $(\epsilon,\delta)$-approximation
for $\exp(r)$.  The resulting algorithm can be given as follows.

 \begin{center}
\setcounter{line}{1}
\begin{tabular}{rl}
\toprule
\multicolumn{2}{l}{{\sc TPA\_Approximation\_Scheme}} \\
\multicolumn{2}{l}{
 {\em Input:} 
 $\epsilon,\delta$ \quad
  {\em Output:} 
 $\hat(\nu(B)/\nu(A))$
 } \\
\midrule
\gk & \hspace*{0em} $\hat r_1 \leftarrow$
 \textsc{Exact\_GPAS}
 $(\epsilon,\delta/2)$) \\
\gk & \hspace*{0em} $\hat r_2 \leftarrow$
 \textsc{Exact\_GPAS}
 $(\ln(1+\epsilon)\hat r_1^{-1}(1-\epsilon),\delta/2)$) \\
\gk & Output $\exp(\hat r)$ 
  \\
\bottomrule
\end{tabular}
\end{center}

This algorithm applies with the understanding that line 3 of 
the algorithm
\textsc{Gamma\_Poisson\_Approximation\_Scheme} is replaced with
$T \leftarrow$\tpa{}, that is, the Poisson with mean $\mu$ is replaced
by a call to \tpa{}.

\begin{center}
  \begin{table}[ht]
  \begin{center}
   \begin{tabular}{cccc}
   $\epsilon$ & $\delta$ & $\mean[T]$ for new method & $\mean[T]$ 
     from older method in~\cite{huber2014a} \\
   \midrule
     0.2 & 0.2 & $607\pm 5$   & 1205\\
     0.2 & 0.01 & $1753\pm 8$ & 2773\\
     0.1 & 0.01 & $5420\pm 10$& 8415\\
    \end{tabular}
  \end{center}
  \caption{The expected number of calls to \tpa{} for given 
           $(\epsilon,\delta)$.  Based off of 1000 simulations.  Times
           reported as mean of sample plus or minus standard deviation
           of sample.}
  \label{TBL:runtimes}
  \end{table}
\end{center}

  Table~\ref{TBL:runtimes} shows the expected running time for the new
  algorithm versus the old, which used Chernoff inequalities to bound
  the tails of the Poisson distribution.  The improvements are in
  the second order, which is why as $\delta$ shrinks relative to $\epsilon$,
  the improvement is lessened.  Still, for reasonable values of 
  $(\epsilon,\delta)$, the improvement is very noticeable.

\begin{example}
  Consider the Ising model~\cite{ising1925}, 
  where each node of a graph with vertex set $V$
  and edge set $E$ is assigned either a 0 or 1.
  For a configuration $x \in \{0,1\}^V$, let 
  $H(x) = \#\{e=\{i,j\} \in E:x(i) = x(j)\}$.  Then say that $X$ is a
  draw from the Ising model if $\prob(X = x) = \exp(\beta H(X))/Z(\beta)$,
  where $Z(\beta) = \sum_{y \in \{0,1\}^V} \exp(\beta H(y))$ is known as
  the {\em partition function}

  The goal is to find the partition function for various values of $\beta$.
  Note that $Z(0) = 2^{\#V}$ is known, so finding $Z(\beta)/Z(0)$ is sufficient
  to find $Z(\beta)$.

  Considering the Ising model on the $4 \times 4$ square lattice with 
  16 nodes in order to keep the numbers reasonable.  Then 
  $Z(1) \approx 3.219 \cdot 10^{11}$ and 
  $\ln(Z(1)/Z(0)) \approx 15.40.$  The method for using \tpa{} on a Gibbs
  distribution is found on p. 99 of~\cite{huber2014a}.  Methods for 
  generating samples from the Ising model for use in \tpa{} abound.  See for 
  instance~\cite{proppw1996,miramr2001,swendsenw1986,huber2003a}.  As long
  as $\beta$ is not too high, these methods are very fast.

  Using 100 calls with $(\epsilon,\delta) = (0.2,0.01)$ gives an estimate
  of $5200\pm70$ for the number of calls needed with the new Poisson estimate,
  while the old method requires 23249, making the new approach 
  over 4 times as fast in this instance for the same error guarantee.
\end{example}


\end{document}